\numberwithin{equation}{section}
\def\cb{{\mathcal B}}
\def\cd{{\mathcal D}}
\def\cf{{\mathcal F}}
\def\ch{{\mathcal H}}
\def\cs{{\mathcal S}}
\def\br{{\mathbb R}}
\def\bs{{\mathbb S}}
\def\bz{{\mathbb Z}}
\def\a{\alpha}
\def\b{\beta}
\def\g{\gamma}        
\def\d{\delta}        \def\D{\Delta}
\def\eps{\varepsilon} 
\def\z{\zeta}
\def\l{\lambda}       
\def\m{\mu}
\def\n{\nu}
\def\r{\rho}
\def\s{\sigma}
\def\om{\omega}        \def\Om{\Omega}
\newtheorem{Thm}{Theorem}[section]
\newtheorem{Prop}[Thm]{Proposition}
\theoremstyle{definition}
\newtheorem{Rem}[Thm]{Remark} 
\theoremstyle{remark}
\def\di{\mathop{\mathop{\rm d}}\!}
\newcommand{\ty}[1]{\mathop{\rm {#1}}}
\def\supp{\mathop{\rm supp}}
\def\ds{\mathop{\bcancel{\rm d}}\!}
\newcommand{\nn}{\nonumber}
\begin{document}

\title[thermodynamics of open systems]
{a proposal for the thermodynamics of certain open systems}
\author{Francesco Fidaleo}
\address{Dipartimento di Matematica,
Universit\`{a} di Roma Tor Vergata, 
Via della Ricerca Scientifica 1, 00133 Rome, Italiy} 
\email{fidaleo@mat.uniroma2.it} 
\author{Stefano Viaggiu}
\email{viaggiu@mat.uniroma2.it}

\keywords{Thermodynamics of open systems, microcanonical ensemble, entropy, equilibrium and non equilibrium thermodynamics, irreversible thermodynamics, non equilibrium steady states, Bose Einstein Condensation.}
\subjclass[2000]{82B03, 82A15, 82B30, 82B35.}

\date{\today}

\begin{abstract}
Motivated by the fact that the (inverse) temperature might be a function of the energy levels in the Planck distribution $n_\eps=\frac1{\z^{-1}e^{\b(\eps)\eps}-1}$ for the occupation number $n_\eps$ of the level $\eps$, we show that it can be naturally achieved by imposing the constraint concerning the conservation of a weighted sum $\sum_{\eps}f(\eps)\eps n_\eps$, with a fixed positive weight function $f$, of the contributions of the single energy levels occupation in the Microcanonical Ensemble scheme, obtaining $\b(\eps)\propto f(\eps)$.
This immediately addresses the possibility that also a weighted sum $\sum_{\eps}g(\eps)n_\eps$
of the particles occupation number is conserved, having as a consequence that the chemical potential might be a function of the energy levels of the system as well. 
This scheme leads to a thermodynamics of open systems in the following way: 
\vskip.2cm
\noindent
{\it the equilibrium is reached when the entropy function is maximised under the constraints that some weighed sums of occupation of the energy levels and the occupation numbers are conserved}. 
\vskip.2cm
\noindent
The standard case of isolated systems corresponds to the weight functions being trivial (i.e. $f, g$ are identically 1).
For such open systems, new and unexpected phenomena which might happen in nature can appear, like the Bose Einstein Condensation in excited levels. The ideas outlined in the present paper may provide a new approach for the treatment of the irreversible thermodynamics.\\
\vskip.1cm
\end{abstract}

\maketitle

\section{introduction}

The possibility that the (inverse) temperature can be a function of the energy levels of the system appeared in \cite{AFQ} as {\it Local Equilibrium} even if, perhaps, it was considered in previous studies. Recently, in \cite{AFQ1} it has been investigated the connection of the Local Equilibrium with the principle of detailed balance for "small" open systems interacting with a "huge" reservoir.

The Local Equilibrium simply means that, in the celebrated Planck formula for the occupation numbers of Bose particles
\begin{equation}
\label{plq}
n_\eps=\frac1{\z^{-1}e^{\b\eps}-1}\,,\quad \eps\in\,\text{the set of energy levels of the system}\,,
\end{equation}
the inverse temperature is supposed to be a function of $\eps$: $\b=\b(\eps)$. Here, $\z$ is the fugacity, and $q=0,\pm1$ correspond to the Boltzmann and Bose/Fermi cases.

A rigorous approach to the Local Equilibrium in terms of the Kubo-Martin-Schwinger (KMS for short) boundary condition can be carried on essentially for systems with finite degrees of freedom and/or systems confined in a bounded spatial region for which, typically, the observables are described by all bounded operators $\cb(\ch)$ on the separable Hilbert space $\ch$, and the hamiltonian of the system is semibounded with compact resolvent with a good behaviour of the asymptotic of the eigenvalues.

For the sake of completeness, we note that a possible way to generalise the Local Equilibrium Principle is to look at the Arveson spectrum of the one parameter group of automorphisms $\{\a_t\}_{t\in\br}$ describing the evolution of the system. Under very natural assumptions, it is also quite well known that for a KMS state $\om$, the Arveson spectrum coincides with the spectrum of $H_\om$ (e.g. \cite{BF}), $H_\om$ being proportional to the logarithm of the modular operator associated to $\om$ by Tomita-Takesaki Theory. This approach has been considered in \cite{dC} in the investigation of the so called spectrally passive states, providing a possible bridge between the Local Equilibrium Principle for general systems and the KMS boundary condition. 

Unfortunately, all physical systems with compact resolvent hamiltonians lead to type $\ty{I}$ factors, whereas it is well known that most of the nontrivial physical models arising from the thermodynamics produce von Neumann algebras of type $\ty{II_1}$ (corresponding to infinite temperature) and $\ty{III_\l}$, $\l\in(0,1]$, even for systems of non interacting particles, see e.g. 
\cite{BF0, BF, BR} and the references therein.

To provide physically relevant examples encompassing the more realistic situation described above, in \cite{AF} the Local Equilibrium is directly defined in terms of gauge invariant quasi free states of CCR algebras, extending the definition to the $q$-deformed ones, $-1\leq q\leq 1$. We then recover the previous Bose case whenever $q=1$, and include the Fermi and the Boltzmann cases $q=-1,0$ respectively. 

After introducing a parameter $\l$ playing mathematically the role of a {\it chemical potential}, for such very natural situation the analogous of \eqref{plq} is given by
\begin{equation}
\label{0plq}
n_\eps=\frac1{e^{\b(\eps)\eps-\l}-q}
\end{equation}
plus, possibly, a distribution supported on the subset $\{{\bf p}\mid\b(\eps({\bf p}))\eps({\bf p})=0\}$ in momentum space. It is proven that such a distributional term in the occupation number density can happen only for positive $q$, and describes the condensation of $q$-particles, excluded for the Fermi-like ones $-1\leq q<0$, and allowed for the Bose-like ones $0<q\leq1$. As it is well known, we again recover that the condensation is forbidden also for the classical situation corresponding to the Boltzmann case $q=0$. 

The Local Equilibrium implies some interesting phenomena like these for which the Bose Einstein Condensation (BEC for short) can take place also on excited energy levels. In addition, this new approach based on the Distribution Theory allows the construction of quasi free states exhibiting the BEC which are completely unknown in literature even in the standard equilibrium situation. 

From the analysis of the Local Equilibrium briefly described above, it emerges that the assumption that the inverse temperature can be a function of the energy levels of the system under consideration, and the introduction of the chemical potential, are made {\it ad hoc} without any further justification. It is then natural to address the question concerning the possibility to recover the framework arising by the Local Equilibrium, from more reasonable basic assumptions which appear physically meaningful. This is precisely the main goal of the present paper.

For such a purpose, we consider the {\it Microcanonical Ensemble} for which the thermodynamical properties of a system made of 
a number of particles of the order of the Avogadro Number
$N_A\sim 10^{23}$ is encoded in the Entropy Functional $S$. Such a functional takes into account the complexity of so huge systems by counting, in an appropriate way, all possible microscopical configurations reproducing the same macroscopical one determined by fixing the total number of particles and energy. To be more precise, fix a system whose hamiltonian $H$ is positive and has compact resolvent:
\begin{equation}
\label{plq3}
H=\sum_{\eps_i\in\s(H)}\eps_i P_{\eps_i}\,.
\end{equation}
Consider the occupation numbers of particles $n_i$ of the energy level $\eps_i$. The Entropy Functional $S$ is defined as
\begin{equation}
\label{plq11}
S(\{n_i\})=k\ln W(\{n_i\})\,,
\end{equation}
where $k=k_B\approx1.3806488\times 10^{-23} JK^{-1}$ is the Boltzmann constant, and $W$ counts all possible microscopical configurations corresponding to the sequence of occupation numbers $\{n_i\}$. It depends on the statistics (Bose/Fermi or Boltzmann) to which obey the identical particles of the system under consideration. Its precise form is given in \eqref{entc} (where $g_i$ is the degeneracy of the level $i$, and $\n_i=n_i/g_i$) after approximating the factorials by the Stirling formula which is legitimated by the enormous number of particles composing the systems under consideration.

The standard equilibrium thermodynamics arises by maximising the entropy by fixing the total energy $E$ and the number of particles $N$, which simply corresponds to maximise the functional
$S(\{n_i\})$ with the constraints 
\begin{equation}
\label{plq1}
\sum_{i}\eps_in_i=E\,, \quad \sum_{i}n_i=N\,.
\end{equation}
The universally accepted approach briefly outlined above provides the starting point of the explanation of the thermodynamics with the ideas of Statistical Mechanics. The reader is referred to the standard textbooks (e.g. \cite{Hu, LL}) for the detailed explanation of the connections between thermodynamics and Statistical Mechanics, and for the natural consequences and various applications.

Summarising in simple words, all that outlined above corresponds to the {\it equilibrium thermodynamics of isolated systems}, where the equilibrium corresponds to maximise the entropy, and such an equilibrium should be reached imposing the conditions \eqref{plq1}. Then we can refer the above analysis of the equilibrium to isolated systems, that is those without any exchange of either energy or matter with the surrounding environment.

It is now natural to consider enormous systems whose complexity is still encoded in the Entropy Functional, but the equilibrium is determined not just by the conservation of the energy and matter, but by the conservation of certain weighted sum of the energy level  and occupation numbers like
\begin{equation}
\label{plq2}
\sum_{i}f(\eps_i)\eps_in_i=e\,, \quad \sum_{i}g(\eps_i)n_i=n\,,
\end{equation}
where $f,g$ are positive functions defined on the spectrum of the hamiltonian of the system \eqref{plq3}. The case $f=g=1$ reproduces the standard equilibrium thermodynamics of isolated systems.
In the present article, 
in searching the condition of equilibrium maximising the entropy,
we relax \eqref{plq1} replacing that with \eqref{plq2}. In so doing, the very simple output (cf. Theorem \ref{vincc1}) asserts that:\\
\vskip.1cm
\noindent
 {\it the maximum of the entropy \eqref{plq11} with the constraints \eqref{plq2} is reached whenever the occupation numbers have the values}
\begin{equation}
\label{3plq3}
n_{\eps_i}=\frac1{e^{b(f(\eps_i)\eps_i-mg(\eps_i))}-q}\,.
\end{equation}
\vskip.2cm
\noindent
Here, the Lagrange multipliers $b$, $bm$ assume the meaning of the generalised inverse temperature and the logarithm of the generalised fugacity, and can be computed in terms of $e,n$ (or equivalenty $E,N$) inserting \eqref{3plq3} in \eqref{plq2} (or \eqref{plq1}). Notice that \eqref{3plq3} coincides with \eqref{0plq} (with $\b(\eps)=bf(\eps)$ and $\l=bm$) whenever $g(\eps)=1$. 

Theorem \ref{vincc1} immediately leads to the fact that in the generalised Planck distribution \eqref{0plq} the inverse temperature might be a function of the energy levels of the system under consideration, and the appearance of a coefficient playing the role of the chemical potential, directly follow from a natural generalisation of the equilibrium principle applied to certain open systems described by the conditions \eqref{plq2} with $g$ identically 1. These correspond to the closed systems, that is those which can exchange only heat/energy with the environment, for which $\sum_{i}f(\eps_i)\eps_in_i$ and $\sum_{i}n_i$ are conserved quantities.

The conclusion of the previous analysis is simply that the complexity of certain open systems satisfying conditions \eqref {plq2} are still encoded in the entropy. As in the standard situation of isolated systems, the equilibrium or equivalently the maximum of the entropy, is reached for occupation numbers \eqref{3plq3}. It is expected that such a generalisation of the principle of the equilibrium might provide the thermodynamical properties of such quite general open systems which might be present in nature.
We also remark that a similar approach has been followed by C. Tsallis where the $q$-entropy considered here is replaced by Tsallis entropy, see e.g. \cite{BD, T, T1}.

In the present paper we also discuss the relations between the generalised (inverse) temperature $b$ and the chemical potential $m$, with the corresponding "physical objects"
$$
\frac1{T}:=\bigg(\frac{\partial S}{\partial U}\bigg)_{N,V}\,,\quad \frac{\m}{T}:=-\bigg(\frac{\partial S}{\partial N}\bigg)_{U,V}\,,
$$
$U, N, V$ being the internal energy, the average number of particles and the volume of the system under consideration.
 
 A section is devoted to the appearance of the BEC for the open systems under consideration generalising some results contained in \cite{AF}. 
 
 Finally, we briefly discuss the Boltzmann gas of free massive particles in a box of volume V in $\br^3$ with 
$$
f(\eps({\bf p}))=a\big(p_x^2+p_y^2+p_z^2\big)^{s/2}\,, \quad a>0\,, s>-2\,,
$$ 
and $g(\eps({\bf p}))=1$, for which all explicit calculations can be carried on.

\section{microcanonical ensemble}
\label{1mic1}

In order to justify the Local Equilibrium which allows the inverse temperature to be a function of the energy levels in the Planck formula, we deal with the Microcanonical Ensemble (cf. \cite{Hu, LL}) and the corresponding Entropy Functional. By using a natural generalisation, also the chemical potential is allowed to be a function of the energy levels of the model. For the sake of the completeness, we derive the formula of the $q$-entropy corresponding to exotic models relative to $q$-particles, $-1\leq q\leq 1$. 

As usual, we start from a system whose hamiltonian $H$ is a selfadjoint strictly positive matrix 
$$
H=\sum_{\eps_i\in\s(H)}\eps_i P_{\eps_i}
$$
uniquely characterised up to unitary equivalence, by the set $\{\eps_i\}$ of its eigenvalues and its degeneracy of the levels (i.e. the multiplicity) 
$$
g_i:=\text{dim}\text{R}(P_{\eps_i})\,.
$$
We can suppose equally well that $H$ is a densely defined positive selfadjoint unbounded operator with compact resolvent acting on an infinite dimensional Hilbert space, obtaining an extreme problem on an infinite dimensional space. As this technicality is not adding anything else to our analysis, we decide not to pursue such a generalisation. 

Suppose that $N$ indistinguishable particles are occupying the levels $\eps_i$ with occupation numbers $n_i$ under the obvious condition $N=\sum_in_i$.
According to the three cases Bose/Fermi and Boltzmann respectively, the number $W(\{n_i\})$ of such possible configurations is given by
$W(\{n_i\})=\prod_iw_i$ 
with
\begin{equation*}
w_i=\left\{
\begin{array}{ll}
\binom{n_i+g_i-1}{n_i} & \text{Bose}\,, \\
\binom{g_i}{n_i}  & \text{Fermi}\,, \\
\frac{g_i^{{n_i}}}{n_i! } & \text{Boltzmann}\,, \\
\end{array}
\right.
\end{equation*}
after dividing $W(\{n_i\})$ by $N!$ in the Boltzmann one, see e.g. Section 8.5 of \cite{Hu}. As usual, we suppose that all $g_i$ and $n_i$ go to infinity justifying the replacement of the factorials with their asymptotic by Stirling formula $m!\approx m^me^{-m}$, obtaining for the entropy
$S(\{\n_i\}):=\ln W(\{n_i\})$ (in the units for which $k_B=1$)
\begin{equation}
\label{entc}
S(\{\n_i\})=\left\{
\begin{array}{ll}
\sum_i g_i[\n_i\ln(1/\n_i+1)+\ln(1+\n_i)] & \text{Bose}\,, \\
\sum_i g_i[\n_i\ln(1/\n_i-1)-\ln(1-\n_i)]  & \text{Fermi}\,, \\
\sum_i g_i\n_i(1-\ln\n_i) & \text{Boltzmann}\,.
\end{array}
\right.
\end{equation}
Here, we have put $\n_i:=n_i/g_i$. 

The entropies given in \eqref{entc} for the Bose/Fermi and Boltzmann alternative can be considered as particular cases of 
of the $q$-{\it entropy} defined for $q\in[-1,0)\cup(0,1]$,
\begin{equation}
\label{0entc}
S_q(\{\n_i\}):=\sum_i g_i\bigg[\frac{(1+q\n_i)}{q}\ln(1+q\n_i)-\n_i\ln\n_i\bigg]\,.
\end{equation}
In fact, the Bose/Fermi cases correspond to the evaluation of $S_q$ for $q=\pm 1$, respectively:
$$
S_{+1/-1}(\{\n_i\})=S_\text{Bose/Fermi}(\{\n_i\})\,.
$$
Concerning the Boltzmann case, we get
$$
\lim_{q\to0}S_q(\{\n_i\})=S_\text{Boltzmann}(\{\n_i\})\,,
$$
pointwise in the variables $\n_i$, and uniformly on all bounded subsets (in the variables $\{\n_i\}$). 

The $q$-entropy could be computed as before by using the so called $q$-deformed statistics, firstly considered in \cite{Fiv}, arising from the $q$-deformed Canonical Commutation Relations. We refer the reader to (14) in \cite{LS} for a similar Entropy Functional still arising from a class of deformed commutation relations. We decide not to pursue more these points because, as usual, we take \eqref{0entc} as the definition of the Entropy Functional.

To avoid unpleasant situations, we fix two strictly positive functions $f$, $g$ on the spectrum $\{\eps_i\}$ of the hamiltonian $H$. However, concerning the continuum case, we can allow the functions
$f$, $g$ to be zero on a negligible subset w.r.t. the measure determined by the resolution of the identity \eqref{1en112} of the one particle hamiltonian, see e.g. Section \ref{BZ}.

The main point of the present paper is to consider the extreme problem for the Entropy Functional \eqref{0entc} with the constraints
\begin{equation}
\label{vinc}
\sum_i f(\eps_i)\eps_in_i=e\,,\quad \sum_i g(\eps_i)n_i=n\,.
\end{equation} 
Here, $e$, $n$ correspond to the weighted sums involving the number of particles and the energy of the system which, in our thermodynamical scheme, are considered as conserved quantities. They depend on the chosen functions $f$ and $g$, which are not explicitly mentioned to shorten the notation. Notice that we can recover the usual thermodynamics when they are identically $1$, obtaining $e=E$ the total energy of the system, and $n=N$ the total number of particles respectively. 
\begin{Thm}
\label{vincc1}
The values $\{\bar\n_i\}$ which maximise the $q$-entropy $S_q$ in \eqref{0entc} subjected to the constraints \eqref{vinc} are given by 
\begin{equation}
\label{vinc1}
\bar\n_i=\frac1{e^{b(f(\eps_i)\eps_i-mg(\eps_i))}-q}\,.
\end{equation} 
\end{Thm}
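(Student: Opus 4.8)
The plan is to read Theorem~\ref{vincc1} as a finite-dimensional constrained optimisation problem in the variables $\{\n_i\}$ (recall that, after the reduction made above, $H$ is a strictly positive matrix, so the index $i$ runs over a finite set), solved by Lagrange multipliers; the only genuinely nontrivial points are to know that a maximiser exists and lies in the interior of the natural domain of $S_q$, and that the stationary point so obtained is indeed the maximum.

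First I would fix the domain of $S_q$: one needs $\n_i>0$ when $q\ge0$, and $0<\n_i<1/|q|$ when $-1\le q<0$ (so $0<\n_i<1$ in the Fermi case $q=-1$), on which \eqref{0entc} is smooth. Differentiating termwise gives
\begin{equation*}
\frac{\partial S_q}{\partial\n_i}=g_i\ln\frac{1+q\n_i}{\n_i}\,,\qquad \frac{\partial^2 S_q}{\partial\n_i\partial\n_j}=-\,\delta_{ij}\,\frac{g_i}{\n_i(1+q\n_i)}\,,
\end{equation*}
so the Hessian is diagonal with strictly negative entries and $S_q$ is \emph{strictly concave}. Since $f,g>0$ and the index set is finite, the feasible set $K=\{\n_i\ge0:\sum_i f(\eps_i)\eps_ig_i\n_i=e,\ \sum_i g(\eps_i)g_i\n_i=n\}$ — intersected with the box $0\le\n_i\le1/|q|$ when $q<0$ — is compact and convex, and $S_q$ extends continuously to $K$ (with the convention $0\ln0:=0$, and, for $q<0$, noting that $S_q$ stays finite as $1+q\n_i\to0^+$). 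Hence the maximum over $K$ is attained and, by strict concavity, at a unique point $\{\bar\n_i\}$. Because $\partial S_q/\partial\n_i\to+\infty$ as $\n_i\to0^+$ (and $\to-\infty$ as $\n_i$ approaches $1/|q|$ from below when $q<0$), moving from any boundary point of $K$ towards its relative interior strictly increases $S_q$; so, provided that relative interior is nonempty — which is exactly the admissibility condition relating $(e,n)$ to the spectral data — the maximiser $\{\bar\n_i\}$ lies in it.

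At such an interior maximiser the first-order condition for maximising the differentiable function $S_q$ over the affine subspace cut out by \eqref{vinc} says that $\nabla S_q(\bar\n)$ lies in the span of the two constraint gradients (no constraint qualification needed): there are $\l_1,\l_2\in\br$ with
\begin{equation*}
g_i\ln\frac{1+q\bar\n_i}{\bar\n_i}=\l_1\,f(\eps_i)\eps_i g_i+\l_2\,g(\eps_i)g_i\qquad\text{for every }i\,.
\end{equation*}
Cancelling $g_i$, writing $b:=\l_1$ and $m:=-\l_2/\l_1$ (the generic case $\l_1\ne0$; the degenerate case $\l_1=0$ forces all $\bar\n_i$ equal and is ruled out by the physical setup), exponentiating and solving for $\bar\n_i$ gives $1/\bar\n_i+q=e^{b(f(\eps_i)\eps_i-mg(\eps_i))}$, that is \eqref{vinc1}; one then checks that this $\bar\n_i$ automatically lies in the natural domain, and that $b,m$ are pinned down by imposing \eqref{vinc}. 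Conversely any $\{\n_i\}$ of the form \eqref{vinc1} lying in the domain solves the first-order condition, so by strict concavity it is the unique maximiser. The step I expect to be the real obstacle is none of the above differentiations but the existence part packaged into the second paragraph: showing that for the physically relevant $(e,n)$ there are admissible multipliers $b,m$ with $e^{b(f(\eps_i)\eps_i-mg(\eps_i))}>q$ for all $i$ (needed for $\bar\n_i>0$ when $q>0$), i.e.\ matching the range of conserved quantities with the range of $(b,m)$.
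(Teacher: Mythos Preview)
Your argument is correct and follows essentially the same route as the paper's: both set up the Lagrangian, compute the identical first derivative $g_i\ln\frac{1+q\n_i}{\n_i}$ and diagonal Hessian $-\d_{ij}g_i/[\n_i(1+q\n_i)]$, solve the stationarity condition to obtain \eqref{vinc1}, and verify negative definiteness (the paper does the same check that $|q|\bar\n_i<1$ for $q<0$). The paper's proof is the bare critical-point computation and stops there; your added care about global existence---compactness of the feasible set, strict concavity, and the boundary blow-up ruling out maxima on $\partial K$---is absent from the paper but only strengthens the conclusion.
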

\begin{proof}
By the previous considerations, we can directly manage $S_q$ for generic $q$ and see that the obtained result still holds for $q=0$. Consider
$$
L((\{\n_i\},b,m):=S_q(\{\n_i\})-b\sum_ig_if(\eps_i)\eps_i\n_i + bm\sum_ig_ig(\eps_i)\n_i\,,
$$
where the form of the Lagrange multipliers $b$ and $-bm$ have been chosen in this way for physical motivations. We get
$$
\frac{\partial L}{\partial\n_i}=g_i\bigg[\ln\frac{1+q\n_i}{\n_i}-b\big(f(\eps_i)\eps_i-m g(\eps_i)\big)\bigg]\,.
$$
Solving the equations $\frac{\partial L}{\partial\n_i}=0$  w.r.t. the $\n_i$, we get \eqref{vinc1}. In addition,
$$
\frac{\partial^2 L}{\partial\n_i\partial\n_j}=-\d_{ij}\frac{g_i}{(1+q\n_i)\n_i}<0
$$
for $\bar\n_i>0$. In fact, it holds automatically true for $q\in[0,1]$. If $q\in[-1,0)$, then it holds true whenever $|q|\n_i<1$. But for $\n_i=\bar\n_i$ we get
$$
|q|\bar\n_i=\frac1{\frac{e^{b(f(\eps_i)\eps_i-mg(\eps_i))}}{|q|}+1}<1\,.
$$
\end{proof}
Notice that the Lagrange multipliers $b$,  $m$ are determined by inserting $\bar n_i=g_i\bar\n_i$ in \eqref{vinc}.

We first note that \eqref{vinc1} is nothing but the Planck occupation number obtained in a rigorous way in \cite{AF} by fixing the chemical potential (corresponding to the Grand Canonical Ensemble for the usual equilibrium thermodynamics) for the trivial function $g(\eps)=1$. Apart from the more rigorous approach, the method in the previous paper \cite{AF} takes into account even the appearance of the BEC. The approach carried out here has the conceptual advantage to explain in a very precise way the motivation for which the temperature is allowed to be a function of energy levels of the system: it is connected with the constraint in \eqref{vinc} associated to a weighted sum of the energies of the levels. On one hand, it reduces to the usual Planck occupation number provided also $f(\eps)=1$, identically. On the other hand, it explains that the choice of the introduction of the chemical potential in \cite{AF} (when it is independent on the temperature) is supported by the previous analysis and corresponds to the case $g(\eps)=1$. The other possible choice discussed in Section 7 of \cite{AF} is also allowed with the choice $g(\eps)\propto f(\eps)$.

\section{some thermodynamic quantities}
\label{tprr}

We discuss a possible comparison of the new thermodynamical variables $b$, $m$ naturally arising from our approach to thermodynamics of open systems, with the previous ones $\b:=1/kT$ and 
$\m$, being respectively the inverse of the temperature $T$ times the Boltzmann constant, and the chemical potential. With $S$ the entropy of the system, they are respectively defined as
\begin{equation}
\label{en00}
\b:=\frac{\partial S}{\partial E}\,,\quad \b\m:=-\frac{\partial S}{\partial N}\,.
\end{equation} 
Notice that, in our framework, $E, N$ are meant non just as conserved quantities, but just as the averaged internal energy and the particle number of the system according to the distribution \eqref{vinc1}. Obviously, these coincide with the conserved quantities in the standard equilibrium case.

The definition of the standard temperature $T$ and the chemical potential $\m$ also arise mathematically as Lagrange multipliers, even if those have a physical justification, see e.g. \cite{Hu, LL}.
In our more general framework, the {\it generalised inverse temperature} $b$ and the {\it generalised chemical potential} $m$ satisfy
\begin{equation}
\label{en0}
b:=\frac{\partial S}{\partial e}\,,\quad bm:=-\frac{\partial S}{\partial n}\,.
\end{equation} 
When $f=g=1$ in \eqref{vinc}, then $b=\b$ and $m=\m$, and we are recovering the standard thermodynamics. The energy $E$ and the number of particles $N$ are naturally given by
\begin{equation}
E=\sum_i \eps_ig_i\bar\n_i\,,\quad N=\sum_ig_i\bar\n_i\,.
\label{en02}
\end{equation}
Then by using \eqref{vinc1}, and taking into account that $b$ and $m$ are function of $e$ and $n$, we obtain
\begin{equation}
\label{en01}
E=E(e,n)\,,\quad N=N(e,n)\,.
\end{equation} 
Suppose now that such functions are smooth and invertible, at least locally. This might be not true in general when the system exhibits phase transitions, even if in this situation it is still possible to manage such a problem, see e.g. \cite{AF, BR, Hu, LL} and Theorem \ref{tcbb} below.
When we have invertibility and regularity, we get
\begin{equation}
\label{en1}
e=e(E,N)\,,\quad n=n(E,N)\,.
\end{equation} 
By using \eqref{en1} in \eqref{en00} and after taking into account the \eqref{en0}, we obtain
\begin{equation}
\label{en112}
\b=b\bigg(\frac{\partial e}{\partial E}-m\frac{\partial n}{\partial E}\bigg)\,,\quad
\b\m=b\bigg(m\frac{\partial n}{\partial N}-\frac{\partial e}{\partial N}\bigg)\,.
\end{equation} 
We can thus compute all standard thermodynamical quantities like the usual temperature and the usual chemical potential in terms of the energy and the number of particles of the system.

In the present paper, we mainly deal with the ideal gas of non interacting particles, whose one particle hamiltonian $h$ acting on the one particle (separable) Hilbert space is given by 
\begin{equation}
\label{1en112}
h=\int_{[0,+\infty)} \eps\di e(\eps)\,.
\end{equation}
We discuss in more detail the conditions such that the \eqref{en1}, and consequently the \eqref{en112}, hold true. 
After defining for $x:=b$, $y:=-bm$,
$$
H(x,y;\eps):=x\eps f(\eps)+yg(\eps)\,,
$$
for a fixed $q\in[-1,1]$ we assume that there exists a Borel measure $\m$ on the real line with $\supp(\m)\subset [0,+\infty)$ such that
\begin{equation}
\label{ennump}
E(x,y)=\int\frac{\eps\di\m(\eps)}{e^{H(x,y;\eps)}-q}\,,\quad N(x,y)=\int\frac{\di\m(\eps)}{e^{H(x,y;\eps)}-q}\,,
\end{equation}
\begin{equation}
\label{ennump1}
e(x,y)=\int\frac{\eps f(\eps)\di\m(\eps)}{e^{H(x,y;\eps)}-q}\,,\quad n(x,y)=\int\frac{g(\eps)\di\m(\eps)}{e^{H(x,y;\eps)}-q}\,.
\end{equation}
Here, $f,g$ are measurable functions which are non negative, almost surely w.r.t. $\m$. The measure $\m$ describes "the density of eigenvalues" of the hamiltonian in the thermodynamic limit and can be recovered by the resolution of the identity \eqref{1en112} for many concrete example. Its cumulative function $N_h(\eps):=\m(-\infty,\eps]$ is known as the {\it Integrated Density of the States}. Concerning the ideal gas with one particle hamiltonian $h=-\frac{\D}{2M}$, it is easily computed as in Section \ref{appp} by using the Fourier Transform. We also refer the reader
to \cite{F2} for a rigorous approach to the definition and the computation of the Integrated Density of the States for quite general models associated to infinite networks.

We now suppose that there exists values $(x_0,y_0)$ and a neighborhood $U\ni(x_0,y_0)$ such that for $(x,y)\in U$
\begin{itemize}
\item[(i)] $e^{H(x,y;\eps)}-q\geq a>0$, almost everywhere w.r.t. $\m$ (absence of the BEC);
\item[(ii)] all integrals in \eqref{ennump} and \eqref{ennump1} are meaningful: 
$$
E(x,y)\,, N(x,y)\,, e(x,y)\,, n(x,y)<+\infty\,;
$$
\item[(iii)] the following integrals, which will provide (up to a sign) the partial derivatives of $E,N$ w.r.t. $x,y$, are finite:
\begin{align*}
&\int\frac{\eps^2f(\eps)e^{H(x,y;\eps)}\di\m(\eps)}{\big(e^{H(x,y;\eps)}-q\big)^2}\,,\quad \int\frac{\eps g(\eps)e^{H(x,y;\eps)}\di\m(\eps)}{\big(e^{H(x,y;\eps)}-q\big)^2}<+\infty\,,\\
&\int\frac{\eps f(\eps)e^{H(x,y;\eps)}\di\m(\eps)}{\big(e^{H(x,y;\eps)}-q\big)^2}\,,\quad \int\frac{g(\eps)e^{H(x,y;\eps)}\di\m(\eps)}{\big(e^{H(x,y;\eps)}-q\big)^2}<+\infty\,;
\end{align*}
\item[(iv)] there exists a measurable function $K$ such that:
\begin{align*}
e^{H(x,y;\eps)}\leq K(\eps)&\,\,\text{almost everywhere w.r.t.}\,\, \m\,,\\
\eps^2f(\eps)K(\eps)\,,\eps g(\eps)K(\eps)&\,,\eps f(\eps)K(\eps)\,, g(\eps)K(\eps)\in L^1(\m)\,.
\end{align*}
\end{itemize}
We note that (i) above is added mainly for physical motivations. In fact, in presence of BEC the quantities $E(x,y)$, $N(x,y)$, $e(x,y)$, $n(x,y)$ might be not univocally determined by \eqref{ennump} and \eqref{ennump1} for the possible appearance of some delta distribution as explained before \eqref{appbec}.
\begin{Prop}
Assume that {\rm (i)-(iv)} above are satisfied, so that in particular, 
the quantities $E(x,y)\,, N(x,y)\,, e(x,y)\,, n(x,y)$ defined by \eqref{ennump}, \eqref{ennump1} are finite.

$$
\text{If}\,\,\int\!\!\!\!\int(\eps_1-\eps_2)\frac{\eps_1 f(\eps_1)g(\eps_2)e^{H(x_0,y_0;\eps_1)+H(x_0,y_0;\eps_2)}}{\big[\big(e^{H(x_0,y_0;\eps_1)}-q\big)\big(e^{H(x_0,y_0;\eps_2)}-q\big)\big]^2}\di\m(\eps_1)\di\m(\eps_2)\neq0\,,\quad\quad\quad
$$
then there exists a neighborhood $(x_0,y_0)\in V\subset U$ for which the vector function ${\bf F(x,y)}:=(E(x,y), N(x,y))$ is invertible with smooth inverse 
$$
{\bf F}^{-1}:{\bf F}(V)\rightarrow V\,.
$$
As a consequence, on $V$ the modified energy $e$ and the modified number $n$ are functions of the average energy and the average number of particles of the system:
$$
e(E,N)=\int\frac{\eps f(\eps)\di\m(\eps)}{e^{(H\circ {\bf F}^{-1})(E,V;\eps)}-q}\,,\quad n(E,N)=\int\frac{g(\eps)\di\m(\eps)}{e^{(H\circ {\bf F}^{-1})(E,V;\eps)}-q}
$$
\end{Prop}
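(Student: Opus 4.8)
The plan is to recognise the displayed double integral as a Fubini rewriting of the Jacobian determinant of ${\bf F}$ at $(x_0,y_0)$, and then to apply the Inverse Function Theorem.

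First I would verify that ${\bf F}\in C^1(U)$. Abbreviating $H=H(x,y;\eps)=x\eps f(\eps)+yg(\eps)$, so that $\partial_xH=\eps f(\eps)$ and $\partial_yH=g(\eps)$, differentiation under the integral sign in \eqref{ennump} produces
\begin{equation}
\label{jacent}
\begin{aligned}
\partial_xE&=-\int\frac{\eps^2f(\eps)e^{H}}{(e^{H}-q)^2}\di\m\,, & \partial_yE&=-\int\frac{\eps g(\eps)e^{H}}{(e^{H}-q)^2}\di\m\,,\\
\partial_xN&=-\int\frac{\eps f(\eps)e^{H}}{(e^{H}-q)^2}\di\m\,, & \partial_yN&=-\int\frac{g(\eps)e^{H}}{(e^{H}-q)^2}\di\m\,.
\end{aligned}
\end{equation}
This is legitimate throughout $U$: by (i) the denominators are bounded below by $a^2>0$, and by (iv) the four integrands in \eqref{jacent} are dominated $\m$-a.e.\ by the $L^1(\m)$ functions $a^{-2}\eps^2f(\eps)K(\eps)$, $a^{-2}\eps g(\eps)K(\eps)$, $a^{-2}\eps f(\eps)K(\eps)$, $a^{-2}g(\eps)K(\eps)$ (condition (iii) asserting moreover that the integrals in \eqref{jacent} converge); the standard differentiation-under-the-integral/dominated-convergence criterion then gives both existence and continuity of the four partials on $U$, i.e.\ ${\bf F}\in C^1(U)$.

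Next I would read off $\det D{\bf F}(x_0,y_0)$ from \eqref{jacent}. Introduce the positive Borel measure $w$ on $[0,+\infty)$ given by
$$
\di w(\eps):=\frac{e^{H(x_0,y_0;\eps)}}{\big(e^{H(x_0,y_0;\eps)}-q\big)^2}\di\m(\eps)\,,
$$
which is finite by (i)--(ii). Then the entries of $D{\bf F}(x_0,y_0)$ are $-\int\eps^2f\di w$, $-\int\eps g\di w$, $-\int\eps f\di w$, $-\int g\di w$; the minus signs cancel in the $2\times 2$ determinant, so
$$
\det D{\bf F}(x_0,y_0)=\Big(\int\eps^2f\di w\Big)\Big(\int g\di w\Big)-\Big(\int\eps f\di w\Big)\Big(\int\eps g\di w\Big)\,.
$$
Since $|\eps_1-\eps_2|\le\eps_1+\eps_2$ and the four integrals above are finite, the function $(\eps_1,\eps_2)\mapsto(\eps_1-\eps_2)\eps_1f(\eps_1)g(\eps_2)$ lies in $L^1(w\otimes w)$, so Fubini's theorem gives
$$
\int\!\!\!\!\int(\eps_1-\eps_2)\,\eps_1f(\eps_1)g(\eps_2)\di w(\eps_1)\di w(\eps_2)=\Big(\int\eps^2f\di w\Big)\Big(\int g\di w\Big)-\Big(\int\eps f\di w\Big)\Big(\int\eps g\di w\Big)\,.
$$
The kernel appearing in the hypothesis factors exactly as $(\eps_1-\eps_2)\eps_1f(\eps_1)g(\eps_2)\di w(\eps_1)\di w(\eps_2)$, so the left-hand side here is precisely the integral in the statement; hence the hypothesis says exactly that $\det D{\bf F}(x_0,y_0)\ne0$.

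Finally, the Inverse Function Theorem applied to ${\bf F}\in C^1(U)$ at the point $(x_0,y_0)$ yields an open $V$ with $(x_0,y_0)\in V\subset U$ on which ${\bf F}$ is a diffeomorphism onto the open set ${\bf F}(V)$, with inverse ${\bf F}^{-1}$ of the same regularity; writing $(x,y)={\bf F}^{-1}(E,N)$ in \eqref{ennump1} gives the asserted expressions for $e(E,N)$ and $n(E,N)$. The one genuinely delicate step is the interchange of differentiation and integration uniformly on a whole neighborhood of $(x_0,y_0)$ — this is exactly the content of hypotheses (iii)--(iv); once \eqref{jacent} is in hand, the rest is the bookkeeping identity above together with a direct appeal to the Inverse Function Theorem. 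If one wants ${\bf F}^{-1}$ to be $C^\infty$ or even real-analytic rather than merely $C^1$, one iterates the same differentiation-under-the-integral argument — legitimate since $H$, hence ${\bf F}$, depends analytically on $(x,y)$ — under the natural higher-order integrability conditions in the spirit of (iii)--(iv).
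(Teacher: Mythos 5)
Your proof is correct and follows essentially the same route as the paper: differentiate \eqref{ennump} under the integral sign (justified by (i), (iii), (iv)), identify $\det D{\bf F}(x_0,y_0)$ with the displayed double integral via the antisymmetrization/Fubini identity, and conclude with the Inverse Function Theorem, substituting ${\bf F}^{-1}$ into \eqref{ennump1}. The paper's own argument is just a terser version of this (it invokes the Sign Permanence Theorem to keep the determinant nonzero on a neighborhood, which your appeal to the Inverse Function Theorem already subsumes), so no further comparison is needed.
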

\begin{proof}
By assumptions, we can derive the \eqref{ennump} under the symbol of integral obtaining
$$
{\rm det}\begin{pmatrix} \frac{\partial E}{\partial x} &\frac{\partial E}{\partial y}\\\frac{\partial N}{\partial x}& \frac{\partial N}{\partial y}\\\end{pmatrix}
=\int\!\!\!\!\int(\eps_1-\eps_2)\frac{\eps_1 f(\eps_1)g(\eps_2)e^{H(x,y;\eps_1)+H(x,y;\eps_2)}}{\big[\big(e^{H(x,y;\eps_1)}-q\big)\big(e^{H(x,y;\eps_2)}-q\big)\big]^2}\di\m(\eps_1)\di\m(\eps_2)\,.
$$
If such a determinant is non-zero in $(x_0,y_0)$, by the Sign Permanence Theorem it continue to be non-zero in some neighborhood $V\subset U$ of $(x_0,y_0)$. 

The assertion now follows by the Inverse Function Theorem.
\end{proof}
In Section \ref{BZ} below, we will verify by direct computation the situation described above for an explicit model.

\section{the ideal bose gas: the bose-einstein condensation}
\label{0ceb}

According to the results in the previous section (see also \cite{AF}), the generalisation of the Planck formula for the occupation number at energy $\eps$ is given by
\begin{equation}
\label{enb1}
n(\eps)=\frac1{e^{b(f(\eps)\eps-mg(\eps))}-q}\,,
\end{equation}
with $f,g:[0,+\infty)\to[0,+\infty)$ are measurable functions with ${\rm essinf} f,{\rm essinf} g\geq0$, and the ${\rm essinf}$ is defined w.r.t the measure on $[0,+\infty)$ determined by the resolution of the identity $e(\eps)$ in \eqref{1en112} of the hamiltonian. In all cases treated in the present paper, such a measure will be equivalent to the Lebesgue measure. We also assume $b>0$. In the standard thermodynamics, this assumption corresponds to the positivity of the temperature, even if negative ones can appear when the hamiltonian is a bounded operator, see Section 73 of \cite{LL}.

The first step is to determine the admissible values of the generalised chemical potential $m$. As
$$
n\equiv\int g(\eps)n(\eps)\di\m(\eps),\,e\equiv\int f(\eps)\eps n(\eps)\di\m(\eps)<+\infty
$$
is automatically satisfied by assumption, in our setting such conditions leads to
$$
n(\eps)\geq0\,a.e.\,,\quad N\equiv\int n(\eps)\di\m(\eps),\,E\equiv\int \eps n(\eps)\di\m(\eps)<+\infty\,.
$$
Concerning the Fermi-like and the Boltzmann cases, the first condition 
$n(\eps)\geq0$ almost everywhere, is automatically satisfied. Thus we assume that the mean internal energy and particle content of the system is finite, obtaining that for $q\in[-1,0]$ all values of the generalised chemical potential are allowed.

The cases $q\in(0,1]$, including the Bose case $q=1$ are more difficult, see e.g. Section 7 of \cite{AF}.
To manage this situation, we discuss only the Bose case and restrict slightly the choice of the functions $f$ and $g$. 

Without loss of generality, we can assume that
\begin{itemize}
\item[(i)] ${\rm essinf}_{[0,+\infty)}\frac{\eps f(\eps)}{g(\eps)}=0$.
\end{itemize}
In addition, in order to avoid extremely pathological situations, we assume that
\begin{itemize}
\item[(ii)] for each $\a>0$ there exists a measurable non negligible set $A$, and $\d>0$ such that $\eps\in A$ implies $\eps f(\eps)\leq\a g(\eps)$ and $g(\eps)\geq\d$.
\end{itemize}
\begin{Prop}
\label{mall}
Under (i) and (ii) above, we get $m\leq0 \Rightarrow f(\eps)\eps-mg(\eps)\geq0$ almost everywhere, and if $m>0$ there exists a measurable non negligible set $A$ such that 
$f(\eps)\eps-mg(\eps)<0$ on $A$.
\end{Prop}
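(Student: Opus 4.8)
The plan is to treat the two implications in the statement separately; neither requires an estimate beyond sign-chasing. For the first one, $m\le0\Rightarrow f(\eps)\eps-mg(\eps)\ge0$ almost everywhere, I would only use positivity: on $[0,+\infty)$ one has $\eps\ge0$; by the standing hypothesis ${\rm essinf}f,{\rm essinf}g\ge0$ one has $f(\eps)\ge0$ and $g(\eps)\ge0$ for $\m$-almost every $\eps$; and $-m\ge0$. Hence $f(\eps)\eps-mg(\eps)=\eps f(\eps)+(-m)g(\eps)$ is a sum of two $\m$-a.e. nonnegative terms, so it is $\m$-a.e. nonnegative.

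For the case $m>0$ I would invoke assumption (ii) with the specific choice $\a:=m/2>0$. This produces a measurable non-negligible set $A$ and a number $\d>0$ such that, for $\eps\in A$, one has simultaneously $\eps f(\eps)\le\a g(\eps)$ and $g(\eps)\ge\d$. Then, for $\eps\in A$,
$$
f(\eps)\eps-mg(\eps)\le\a g(\eps)-mg(\eps)=(\a-m)g(\eps)=-\tfrac{m}{2}\,g(\eps)\le-\tfrac{m}{2}\,\d<0,
$$
where the first inequality uses $\eps f(\eps)\le\a g(\eps)$, the last uses $\a-m=-m/2<0$ together with $g(\eps)\ge\d>0$ (multiplying $g(\eps)\ge\d$ by the negative number $\a-m$ reverses the inequality). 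This exhibits the required set $A$ and finishes the proof.

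Since both parts reduce to inspecting signs under the stated hypotheses, I do not expect a genuine obstacle. The only point that deserves a moment's attention is precisely why assumption (ii), rather than the weaker (i), is the right tool: without the uniform lower bound $g(\eps)\ge\d$ on $A$, the quantity $f(\eps)\eps-mg(\eps)$ on the set where $\eps f(\eps)/g(\eps)$ is small could degenerate to $0$, and one would only obtain a.e.\ non-positivity rather than strict negativity on a non-negligible set. Assumption (i) itself is not used in this argument; it records the complementary fact that the infimum in question is $0$, and it is what one actually needs in the subsequent discussion of the BEC.
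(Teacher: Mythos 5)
Your proof is correct and follows essentially the same route as the paper: the $m\le0$ case is a direct sign check, and for $m>0$ you make the same choice $\a=m/2$ in assumption (ii) and derive the same chain of inequalities $f(\eps)\eps-mg(\eps)\le-\tfrac{m}{2}g(\eps)\le-\tfrac{m\d}{2}<0$ on the non-negligible set $A$. Your closing observation that (i) is not needed here is also consistent with the paper's argument.
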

\begin{proof}
The first assertion is trivial. Concerning the second one, for each $m>0$ choose $\a=m/2$. On the corresponding non negligible set $A$ and for the corresponding $\d>0$, we get
$$
f(\eps)\eps-mg(\eps)\leq-\frac{m}2g(\eps)\leq-\frac{m\d}2<0\,.
$$
\end{proof}
A common phenomenon in the Bose case is the BEC. It is well known that a necessary condition for BEC is that
$\lim_{\eps\to\eps_0}n(\eps)=+\infty$ in \eqref{enb1}, which means 
$$
\lim_{\eps\to\eps_0}\big(f(\eps)\eps-mg(\eps)\big)=0\,.
$$
When this condition is satisfied, a macroscopic amount of particles might occupy the energy level $\eps_0$ in the thermodynamical limit. In the standard thermodynamics, it happens only for $\eps_0=0$ (cf. \cite{BR, Hu, LL}). Notice that in our setting, it was shown in \cite{AF} that the BEC can take place also in excited levels. In the previous mentioned paper, it is seen that the condensation phenomenon can take place also for the exotic cases $q\in(0,1)$.

In order to exhibit states describing BEC even in the more general context of the present paper, we specialise the matter to the simplest model describing non relativistic ideal bosons living on 
$\br^d$. Analogous considerations can be done for bosons on lattices $\bz^d$. Indeed, fix the functions in the class 
$$
\check{\cd}(\br^d)\subset\cs(\br^d)\subset L^2(\br^d,\di^d{\bf x})
$$ 
made of the Fourier Anti-Transform of all infinitely often differentiable functions with compact support in momentum space. The main ingredient will be the opposite of the Laplace operator on 
$\br^d$
$$
-\D=-\sum_{j=1}^d\frac{\partial^2\,\,\,\,}{\partial x^2_j}\,.
$$
The one particle hamiltonian of the model will be $h=(-\D)^{s/2}$, $s\geq1$ avoiding the unphysical cases $s<1$, and including the massive case (with $M=1/2$ for the mass of the particles) 
corresponding to $s=2$, and the phonon/photon hamiltonian (with the speed of the light/sound $c/v=1$) corresponding to $s=1$. We also suppose that the Planck constant is identically 1.
The one particle hamiltonian  is nothing but the multiplication for the function
$$
p^s=\bigg(\sum_{j=1}^dk_j^2\bigg)^{s/2}
$$
in the momentum space after Fourier transform, where as usual, ${\bf p}=(p_1,\dots,p_d)$. 

Proposition \ref{vincc1} asserts that the occupation numbers \eqref{enb1}, on one hand satisfy a maximum principle for the entropy (Microcanonical Ensemble), and on the other hand provide the two point function for a quasi free state satisfying a principle of equilibrium (3.7) of \cite{AF} (corresponding to the Grand Canonical Ensemble in the usual equilibrium), which we report for the convenience of the reader:
\begin{equation}
\label{lepccr2}
\om(a^{\dagger}(\check F)a(\g_{b,m}\check G))=\om(a(\check G)a^{\dagger}(\check F))\,,\quad F,G\in\check{\cd}(\br^d)\,.
\end{equation}
Here, $a^\dagger, a$ are the creator and annihilator distribution satisfying the Canonical Commutation Relations (CCR for short), see \cite{BR}. In addition,
for $b>0$ and all admissible values of $m$ (which leads to $m\leq0$ for the Bose case under the assumptions in Proposition \ref{mall}),
\begin{equation}
\label{crfba}
\om(a^\dagger(\check F)a(\check G)):=
\int_{\br^d}\frac{F({\bf p})\overline{G({\bf p})}}{e^{b(f(p^s)p^s-mg(p^s))}-1}\di^d{\bf p}\,,
\end{equation}
and after Fourier Transform,
\begin{equation*}
(\g_{b,m}F)({\bf p})=e^{b(f(p^s)p^s-mg(p^s))}F({\bf p})\,,\quad F\in{\cd}(\br^d)\,.
\end{equation*}
It is possible that for some admissible values of $m$, also other physically meaningful states satisfy condition \eqref{lepccr2} by the appearance of some delta functions on the r.h.s. of \eqref{crfba}, see \cite{AF, BR}. This is precisely when the BEC takes place, which we are going to discuss.

We start by looking at the local density of particles of the quasi free state $\om$. It is given by
\begin{equation*}
\r_\om({\bf r})=\om(a^\dagger(\d_{\bf r})a(\d_{\bf r}))\,,
\end{equation*}
where $\d_{\bf r}$ is the Dirac distribution centered in ${\bf r}\in\br^d$, provided that the r.h.s. is meaningful, otherwise it is infinite. As it is shown in Section 5 of \cite{AF}, the condensation of particles can take place for any case of $q\in(0,1]$ also in the present setting of non equilibrium thermodynamics.
Here, we report the general Bose case.
\begin{Thm}
\label{tcbb}
For the functions $f,g$ satisfying (i), (ii) above, let for some $b>0$ and $m\leq0$
$$
\frac1{e^{b(f(p^s)p^s-mg(p^s))}-1}\in L^1_{\rm loc}(\br^d)\,,\quad f(x)x-mg(x)\in L^\infty_{\rm loc}(\br_+)\,.
$$
Suppose further that for some $x_0\in[0,+\infty)$, 
$$
\lim_{x\to x_0}\big(f(x)x-mg(x)\big)=0\,.
$$ 
For each bounded positive Radon measure $\n$ on the sphere $\bs_k\subset\br^d$ of radius $k$
and $F,G\in\cd(\br^d)$,
the quasi free state $\om$ with two point function
\begin{equation}
\label{1ab1}
\om(a^\dagger(\check F)a(\check G)):=
\int_{\br^d}\frac{F({\bf p})\overline{G({\bf p})}}{e^{b(f(p^s)p^s-mg(p^s))}-1}\di^d{\bf p}
+\int_{\bs_k}F({\bf p})\overline{G({\bf p})}\di\n({\bf p}) 
\end{equation}
satisfies the condition \eqref{lepccr2}, provided that $k^s=x_0$.
In addition, if 
$$
\frac1{e^{b(f(p^s)p^s-mg(p^s))}-1}\in L^1(\br^d)\,,
$$
then 
\begin{equation*}
\r_\om({\bf r})=\int_{\br^d}\frac{\di^d{\bf p}}{e^{b(f(p^s)p^s-mg(p^s))}-1}+\n(\bs_k)\,,
\end{equation*}
hence it is finite.
\end{Thm}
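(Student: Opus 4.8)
The plan is to establish the two assertions separately; each reduces to a short computation with the two point function \eqref{1ab1}, the genuine content being to see why the singular term survives. Write throughout $h({\bf p}):=b\bigl(f(p^s)p^s-mg(p^s)\bigr)$. Since $m\leq 0$, Proposition~\ref{mall} gives $h\geq 0$ a.e., hence $e^{h}-1>0$ a.e.; combined with $\frac1{e^h-1}\in L^1_{\rm loc}(\br^d)$, the local boundedness of $f(x)x-mg(x)$, and the finiteness of $\n$, this shows that the right hand side of \eqref{1ab1} is a well defined nonnegative sesquilinear form in $(F,G)$, so it does define a quasi free state $\om$. It remains to verify \eqref{lepccr2}, i.e. $\om\bigl(a^\dagger(\check F)a(\g_{b,m}\check G)\bigr)=\om\bigl(a(\check G)a^\dagger(\check F)\bigr)$ for all $F,G\in\cd(\br^d)$; note that $f(x)x-mg(x)\in L^\infty_{\rm loc}$ makes $\g_{b,m}\check G$ a vector for which the left hand side, read off from \eqref{1ab1}, is meaningful.

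The first step is to expand both sides. The Fourier transform of $\g_{b,m}\check G$ is ${\bf p}\mapsto e^{h({\bf p})}G({\bf p})$, and $h$ is real valued, so \eqref{1ab1} gives
\begin{equation*}
\om\bigl(a^\dagger(\check F)a(\g_{b,m}\check G)\bigr)=\int_{\br^d}\frac{F\overline{G}\,e^{h}}{e^{h}-1}\di^d{\bf p}+\int_{\bs_k}F\overline{G}\,e^{h}\di\n\,,
\end{equation*}
whereas the CCR give $\om\bigl(a(\check G)a^\dagger(\check F)\bigr)=\om\bigl(a^\dagger(\check F)a(\check G)\bigr)+\int_{\br^d}F\overline{G}\di^d{\bf p}$, which by $\frac1{e^h-1}+1=\frac{e^h}{e^h-1}$ equals $\int_{\br^d}\frac{F\overline{G}\,e^h}{e^h-1}\di^d{\bf p}+\int_{\bs_k}F\overline{G}\di\n$. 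Comparing, \eqref{lepccr2} is equivalent to
\begin{equation*}
\int_{\bs_k}F({\bf p})\overline{G({\bf p})}\bigl(e^{h({\bf p})}-1\bigr)\di\n({\bf p})=0\qquad\text{for all }F,G\in\cd(\br^d)\,.
\end{equation*}
This is where the hypotheses $k^s=x_0$ and $\lim_{x\to x_0}\bigl(f(x)x-mg(x)\bigr)=0$ are used: on $\bs_k$ one has $|{\bf p}|=k$, hence $p^s=k^s=x_0$, so $h$ is constant on $\supp\n$ equal to $b\bigl(f(x_0)x_0-mg(x_0)\bigr)=0$; thus $e^{h}\equiv 1$ on $\bs_k$ and the integral vanishes identically, proving the first assertion.

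For the density I would compute $\r_\om({\bf r})=\om\bigl(a^\dagger(\d_{\bf r})a(\d_{\bf r})\bigr)$ by approximation, since \eqref{1ab1} is given a priori only for $F,G\in\cd(\br^d)$ while $\widehat{\d_{\bf r}}$ is merely bounded with $|\widehat{\d_{\bf r}}|\equiv 1$. Take smooth cutoffs $\chi_n\uparrow 1$ and set $F_n({\bf p}):=\chi_n({\bf p})e^{-i{\bf p}\cdot{\bf r}}\in\cd(\br^d)$, so that $\r_\om({\bf r})=\lim_n\om\bigl(a^\dagger(\check F_n)a(\check F_n)\bigr)$ and $|F_n|^2\uparrow 1$ pointwise. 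Under the extra hypothesis $\frac1{e^h-1}\in L^1(\br^d)$, dominated convergence with dominant $\frac1{e^h-1}$ gives $\int_{\br^d}\frac{|F_n|^2}{e^h-1}\di^d{\bf p}\to\int_{\br^d}\frac{\di^d{\bf p}}{e^h-1}$, and finiteness of $\n$ gives $\int_{\bs_k}|F_n|^2\di\n\to\n(\bs_k)$; passing to the limit in \eqref{1ab1} produces the asserted formula for $\r_\om({\bf r})$, which is finite because its first term is a convergent integral by hypothesis and its second is the total mass of a bounded measure.

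The two steps I expect to be delicate are precisely the ones just highlighted. The first is keeping the CCR bookkeeping straight so that the whole identity collapses to the single requirement that $f(p^s)p^s-mg(p^s)$ vanish on the condensation sphere — this is the conceptual heart, and is what forces the radius to satisfy $k^s=x_0$. The second is the limiting argument extending \eqref{1ab1} from test functions to $\d_{\bf r}$: it is exactly here that the \emph{global} integrability $\frac1{e^h-1}\in L^1(\br^d)$, as opposed to merely $L^1_{\rm loc}$, together with the finiteness of $\n$, is indispensable, since without them the interchange of limit and integral — and hence the finiteness of $\r_\om$ — could fail.
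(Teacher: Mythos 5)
Your argument is correct and follows essentially the same route as the paper's proof: the CCR bookkeeping reduces \eqref{lepccr2} to the vanishing of $\int_{\bs_k}F\overline{G}\,\bigl(e^{b(f(p^s)p^s-mg(p^s))}-1\bigr)\di\n$, which holds because the limit hypothesis at $x_0=k^s$ forces this factor to vanish on the condensation sphere, and the density formula follows from \eqref{1ab1} under the global $L^1$ assumption. The only point worth phrasing more carefully (as the paper does) is that, since $\bs_k$ is Lebesgue-null and $f,g$ are merely measurable, the value of $e^{b(f(p^s)p^s-mg(p^s))}$ on $\bs_k$ must be taken as the limit value $1$ of the a.e.\ representative continuous at $x_0$, rather than as the pointwise value $e^{b(f(x_0)x_0-mg(x_0))}$.
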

\begin{proof}
The proof follows {\it mutatis mutandis} the analogous one of Theorem 4.1 of \cite{AF}. We sketch it for the convenience of the reader.

First of all, we note that Proposition \ref{mall} assures that the integrand in \eqref{1ab1} is not negative. Furthermore,
thanks to $\frac1{e^{b(f(p^s)p^s-mg(p^s))}-1}\in L^1_{\rm loc}(\br^d)$, \eqref{1ab1} is well defined for each $F,G\in\cd(\br^d)$. As $f(x)x-mg(x)\in L^\infty_{\rm loc}(\br_+)$ then 
$\g_{b,m}F\in L^2(\br^d)$, provided $F\in\cd(\br^d)$. Denoting by $\cf$ the Fourier Transform, if $k^s=x_0$, the function $\cf\big(e^{b(f(h)h-mg(h))}\check F\big)$ is uniquely defined in ${\bf k}$ as
\begin{align*}
\cf\big(e^{b(f(h)h-mg(h))}\check F\big)&({\bf k})=\big(\lim_{{\bf p}\to{\bf k}}e^{b(f(p^s)p^s-mg(p^s))}\big)F({\bf k})\\
=&\big(\lim_{x\to x_0}e^{b(f(x)x-mg(x))}\big)F({\bf k})=F({\bf k})\,,
\end{align*}
because the function $e^{b(f(p^s)p^s-mg(p^s))}$ coincides a.e. with a measurable function which is continuous in $k^s=x_0$. 

Collecting together, we have that $e^{b(f(h)h-mg(h))}\check f$ is in the domain of the form \eqref{1ab1}. In addition, by using the Canonical Commutation Relations, we compute
\begin{align*}
&\om(a(\check G)a^\dagger(\check F))\\
=&\int_{\br^d}\bigg(1+\frac1{e^{b(f(p^s)p^s-mg(p^s))}-1}\bigg)
F({\bf p})\overline{G({\bf p})}\di^d{\bf p}+\int_{\bs_k}F({\bf p})\overline{G({\bf p})}\di\n({\bf p})\\
=&\int_{\br^d}\frac{e^{b(f(p^s)p^s-mg(p^s))}}{e^{b(f(p^s)p^s-mg(p^s))}-1}
F({\bf p})\overline{G({\bf p})}\di^d{\bf p}+\int_{\bs_k}e^{b(f(p^s)p^s-mg(p^s))}F({\bf k})\overline{G({\bf k})}\di\n({\bf p})\\
=&\int_{\br^d}\frac{F({\bf p})\overline{G({\bf p})}}{e^{b(f(p^s)p^s-mg(p^s))}-1}
\di^d{\bf p}+\int_{\bs_k}F({\bf k})\overline{G({\bf k})}\di\n({\bf p})\\
=&\om\big(a^\dagger(\check F)a\big(e^{b(f(h)h-mg(h))}\check G\big)\big)\,,
\end{align*}
that is \eqref{lepccr2} is satisfied.

If $\frac1{e^{b(f(p^s)p^s-mg(p^s))}-1}\in L^1(\br^d)$, then one easily check that the local density of particles $\r_\om({\bf r})$, which does not depend on ${\bf r}$ as the system under consideration is homogeneous, is finite.
\end{proof}
\begin{Rem}${}$\\
\vskip.005cm
\noindent
(i) In this more general context for which also the particles number is not conserved, we can have condensation for different values of the generalised chemical potential, that is those in the subset
$$
\bigg\{m\leq0\mid \lim_{x\to x_0}\big(f(x)x-mg(x)\big)=0,\,\text{for some}\,\, x_0\geq0\bigg\}\,.
$$
\vskip.02cm
\noindent
(ii) The rotation symmetry is spontaneously broken if $x_0>0$ in Theorem \ref{tcbb}. In order to obtain a rotationally invariant quasi free state, it is enough to take in \eqref{1ab1} any rotationally invariant measure  $\n$ on $\bs_k$.\\
\end{Rem}
The two boundedness conditions in Theorem \ref{tcbb} have a different meaning. The second one $f(x)x-mg(x)\in L^\infty_{\rm loc}(\br_+)$, automatically verified in the usual equilibrium setting when $f=g=1$ and $m=0$, might be relaxed case-by-case when one considers concrete examples. 

The first one 
$\frac1{e^{b(f(p^s)p^s-mg(p^s))}-1}\in L^1_{\rm loc}(\br^d)$ has an important physical meaning. First of all, we note that $\frac1{e^{\b p^s}-1}\in L^1_{\rm loc}(\br^d)$ is equivalent to
$\frac1{e^{\b p^s}-1}\in L^1(\br^d)$ in the equilibrium situation. This is nothing but the assumption that the {\it critical density} at inverse temperature $\b$
\begin{equation*}
\r_c(\b):=\int_{\br^d}\frac{\di^d{\bf p}}{e^{\b p^s}-1}
\end{equation*}
is finite. In our more general situation, the generalised critical density depends on both generalised inverse temperature and chemical potential $b,m$. Moreover, it is possible to have states satisfying \eqref{lepccr2} and 
exhibiting BEC with infinite critical density
$$
\int_{\br^d}\frac{\di^d{\bf p}}{e^{b(f(p^s)p^s-mg(p^s))}-1}\,,
$$
provided $\frac1{e^{b(f(p^s)p^s-mg(p^s))}-1}$ is only locally summable. Such states are unphysical as their local density is infinite. A similar phenomenon happens in studying BEC in equilibrium thermodynamics for inhomogeneous systems, but it is of of different nature. In fact, even for such models the critical density can diverge, but the associated quasi free states have finite local density. It is not a contradiction simply because of inhomogeneity. The reader is referred to \cite{F1, F3, F4, FGI} for a detailed treatment of examples exhibiting such a phenomenon.

\section{the ideal boltzmann gas: a toy example}
\label{BZ}

In order to have an idea of what can happen, we consider the model made of the ideal Boltzmann gas consisting of spinless massive monoatomic particles of mass $M$ in a three dimensional space, for which the weight functions in \eqref{vinc} have the form
$$
f({\bf p})=ap^s\,,\quad g(x)=1\,,
$$
with $a>0$, $s>-2$ are fixed parameters. In all such situations, the involved integrals are convergent, thus everything is mathematically meaningful. The case $a=1$ and $s=0$ corresponds to the usual equilibrium one.

We start by defining the integrals
$$
I_{t,s}:=\int_0^{+\infty}\di x x^te^{-x^{2+s}}\,,\quad t\geq0\,,s>-2\,.
$$
We also put
$$
z:=e^{bm}\,, \quad A:=\frac{a}{2M}\,, \quad B:=Ab\,.
$$
By taking into account \eqref{b4}, and \eqref{b5} with $q=0$ for the passage to the continuum, we get for the density of the particles
$$
\di n({\bf p})=Vn({\bf p})\ds^3{\bf p}=zV\exp(-Bp^{2+s})\ds^3{\bf p}\,.
$$
After passing to spherical coordinates and an elementary change of variable, we obtain 
\begin{equation}
\label{toy}
N=\frac{4\pi I_{2,s}Vz}{h^3B^{\frac{3}{2+s}}}\,,\quad E=\frac{4\pi I_{4,s}AVz}{h^3aB^{\frac{5}{2+s}}}\,,
\end{equation}
which can be solved, obtaining
\begin{equation}
\label{toy1}
B=\bigg(\frac{AI_{4,s}N}{aI_{2,s}E}\bigg)^{1+\frac{s}2}\,,\quad z=\bigg(\frac{AI_{4,s}N}{aI_{2,s}E}\bigg)^{\frac{3}2}\frac{h^3N}{4\pi I_{2,s}V}\,,
\end{equation}
Obviously, for the weighted integral of particles we have $n=N$. Concerning the weighted total energy, we get
\begin{equation}
\label{toy2}
e=\frac{4\pi I_{4+s,s}AVz}{h^3B^{\frac{5+s}{2+s}}}\,,
\end{equation}
which by \eqref{toy1} leads to
\begin{equation}
\label{toy3}
e(N,E)=\frac{I_{4+s,s}(2MI_{2,s})^{\frac{s}2}a}{(I_{4,s})^{1+\frac{s}2}}\bigg(\frac{E}{N}\bigg)^{\frac{s}2}E\,,
\end{equation}
We easily note that $E=e$ for $s=0$ and $a=1$. 

The main object in the Microcanonical Ensemble is the entropy $S=S(N,U,V)$ where $U\equiv E$ is the mean internal energy. We compute 
the entropy for the infinite system subjected to the conditions listed above. After the passage to the continuum, \eqref{entc} for the Boltzmann case becomes
$$
S=kV\int\ds^3{\bf p}n({\bf p})(1-\ln n({\bf p}))\,,
$$
which by \eqref{toy2} leads to
$$
S=kN(1-\ln z)+\frac{kB}{A}e(N,E)\,.
$$
By using \eqref{toy1} and \eqref{toy3}, we obtain
\begin{align}
\label{toy4}
&S(N,V,U)=\bigg(1+\frac{I_{4+s,s}}{I_{2,s}}\bigg)kN\\
+&kN\ln\left[\bigg(\frac{4I_{2,s}}{\sqrt{\pi}}\bigg)^{5/2}\bigg(\frac{3\sqrt{\pi}}{8I_{4,s}}\bigg)^{3/2}\frac{V}{N}\bigg(\frac{MU}{3\pi\hbar^2N}\bigg)^{3/2}\right]\nn\,.
\end{align}
As 
$$
I_{2n,0}=\frac{(2n-1)!!}{2^{n+1}}\sqrt{\pi}
$$
(cf. Section 29 of \cite{LL}), for $s=0$ we get the so called Sackur-Tetrode formula (see e.g. (6.62) of \cite{Hu})
$$
S(N,V,U)=\frac52kN+kN\ln\left[\frac{V}{N}\bigg(\frac{MU}{3\pi\hbar^2N}\bigg)^{3/2}\right]\,.
$$
We note that, solving \eqref{toy4} w.r.t. $U$ and differentiating $U(N,S,V)$ w.r.t. to $S$ and $V$, for the usual temperature and pressure we obtain 
$$
T=\bigg(\frac{\partial U}{\partial S}\bigg)_{N,V}=\frac{2U}{3kN}\,,\quad P=-\bigg(\frac{\partial U}{\partial V}\bigg)_{N,S}=\frac{2U}{3V}\,.
$$
We then recover the equation of the state $PV=NkT$ for such a Boltzmann gas which, surprisingly, coincides with the one of an
ideal Boltzmann gas for the usual equilibrium case (i.e. $s=0, a=1$).

We compute the Helmholtz free energy, which is the thermodynamic potential that measures the available work obtainable from a closed thermodynamic system at a constant temperature in the case of the standard equilibrium. By definition (see e.g. \cite{Hu, LL}), it is given by $F:=U-TS$ and should be computed by the standard variables $N, V, T$.
By taking into account that
$$
\frac{1}{T}\equiv\bigg(\frac{\partial S}{\partial U}\bigg)_{N,V}=\frac{3kN}{2U}\Rightarrow U=\frac32NkT\,,
$$
one recovers that
$$
F(N,V,T)=\frac32NkT-TS\bigg(N,V,\frac32NkT\bigg)\,,
$$
obtaining
$$
F(N,V,T)=\bigg(\frac12-\frac{I_{4+s,s}}{I_{2,s}}\bigg)NkT
-NkT\ln\left[\bigg(\frac{4I_{2,s}}{\sqrt{\pi}}\bigg)^{5/2}\bigg(\frac{3\sqrt{\pi}}{8I_{4,s}}\bigg)^{3/2}\frac{V}{N}\bigg(\frac{MkT}{2\pi\hbar^2}\bigg)^{3/2}\right]\,.
$$
As usual, when $s=0$ we get the equilibrium thermodynamic formula
$$
F(N,V,T)=-NkT\ln\bigg[\frac{eV}{N}\bigg(\frac{MkT}{2\pi\hbar^2}\bigg)^{3/2}\bigg]\,.
$$
It remain open the thermodynamic meaning of the free energy in the more general cases described in the present paper.

For the sake of completeness, we compute the grand potential given in our situation as
$$
\Om:=U-TS-\m N=-PV\,,
$$
where $P$ is the pressure. By using the equation of the state $PV=NkT$, we obtain $\Om=-NkT$ which
should be computed in terms of the variables $V,T,\m$.
By \eqref{en112} \eqref{toy1} and \eqref{toy3}, we recover 
\begin{equation}
\label{toy5}
B=\bigg(\frac{I_{4,s}}{(2+s)MI_{4+s,s}}\frac1{kT}\bigg)^{1+\frac{s}2}\,,\quad z=e^{\frac{\m}{kT}-\frac{3s}{2(1+s)}}\,.
\end{equation}
Inserting \eqref{toy5} in \eqref{toy}, we get
$$
\Om(V,T,\m)=-\frac{4I_{2,s}}{\sqrt{\pi}}\bigg(\frac{(2+s)I_{4+s,s}}{2I_{4,s}}\frac{M}{4\pi\hbar^2}\bigg)^{3/2}(kT)^{5/2}Ve^{\left(\frac{\m}{kT}-\frac{3s}{2(1+s)}\right)}\,.
$$

\section{outlook} 

Motivated by the fact that the (inverse) temperature might be a function of the energy levels in the Planck distribution, we have shown that it can be naturally achieved by imposing the constraint concerning the conservation of a weighted sum of the contributions of the single energy levels occupation in the Microcanonical Ensemble scheme. By supposing that also the number of particles is not conserved, but only a weighted sum of the occupation numbers, we have proposed a way to manage the thermodynamic properties of such open systems. 

Such open systems could be in principle reproducible in laboratory by considering a free gas which can interact with the walls of the boxes by absorbing or relaxing energy according with the energy of the single particle. In addition, the walls could absorb or release particles of the same nature of those present in the box proportionally to the energy levels of the system .

The situation described above might happen in nature where some metastable state can be created in extremely condensed systems far from the usual equilibrium. Such cases might appear in cosmological setting (cf. \cite{V1, V2, V3, V4}), and in plasma physics where a part of the system, small compared with the whole, can exchange matter and energy with the environment according to the rule explained above. The resulting, probably metastable, state of the small system falls in the class of the so called Non Equilibrium Steady States, and it might be managed according to the principles of Irreversible Thermodynamics.  Some similar situation might happen in a rough approximation to the gas of neutrons in nuclear reactors in forming ${}^{239}$Pu
from ${}^{238}$U. 

The proposal contained in the present paper asserts that the cases listed above, more general than the usual ones arising from equilibrium thermodynamics, can still be understood in terms of equilibrium. Indeed, the proposal of the present paper can be summarised as follows. 
\begin{itemize}
\item The scheme to treat the statistical properties of very complex systems having the size of the order of many times the Avogadro Number $N_A$ is the Microcanonical Ensemble where the main object is the Entropy Functional, given in \eqref{entc} for the gas of particles obeying to the various statistics (i.e. Boltzmann, the alternative Bose/Fermi, and formally the $q$-statistics).
\item Within the Microcanonical Ensemble scheme, we can also describe complex systems which can exchange matter and energy with the environment, that is open systems. The isolated systems can be viewed as a particular situation where nothing is exchanged.
\item The equilibrium still assumes the same form: it is reached when the entropy is maximum according to certain constraints imposed to the system like those in \eqref{vinc}, even if more general ones can be considered as well. The isolated systems leading to the standard equilibrium situation corresponds to the case when the weight functions $f,g$ are identically 1.
\end{itemize}
As it has been already discussed in \cite{AF}, there are also some interesting consequences concerning the BEC for the Bose particles (and formally for the exotic particles obeying the $q$-statistics for $0<q<1$). In this new scheme, it is allowed also on excited levels: {\it a non trivial amount of particles can occupy a level for which the energy of such a level is greater than zero}. Indeed by Theorem \ref{tcbb} (see also Theorem 4.1 in \cite{AF}), for the particular situation describing BEC for closed systems, the 
distribution of particles is given by
\begin{equation}
\label{appbec}
\di n(\eps)\propto \bigg[a\d(\eps-\eps_0)+\frac1{e^{bf(\eps)\eps}-1}\bigg]\di\m(\eps)\,,
\end{equation}
where $\di\m(\eps)=h(\eps)\di\eps$ for most of the cases of interest, and $a\d(\eps-\eps_0)$ takes into account of the condensate occupying the level $\eps_0$ satisfying $f(\eps_0)\eps_0=0$. 

In the usual situation when $f=1$, the BEC implies that the amount of condensed can occupy only the level $\eps=0$. Then it seems that the condensation of photons is forbidden in the usual equilibrium thermodynamics: $\eps=0$ implies $\hbar {\bf k}=0$ that is the photon is at rest, which is impossible. Conversely to the standard equilibrium case, we note that the BEC of photons is allowed in the scheme first proposed in \cite{AF} and generalised in Section \ref{0ceb} of the present paper, providing potential applications to nonlinear optics.

On the other hand, very recently such a photon condensation has been detected in the so called "optical microcavity", see \cite{dL, KSVW}. Due to the combined effect of the paraxial approximation $k_z>>k_r$ ($k_r=\sqrt{k_x^2+k_y^2}$ being the modulus of the radial component of the wave vector ${\bf k}=(k_x,k_y,k_z)$) and the trapping due to the mirror curvature of the walls of the cavity, the system is equivalent to one described by the hamiltonian
\begin{align*}
H=\hbar c\sqrt{k_z^2+k_r^2}+\frac{\hbar k_z\Om^2r^2}{2c}\approx&\frac{k_z}{\hbar c} c^2+\frac{(\hbar k_r)^2}{2\frac{k_z}{\hbar c}}+\frac12\frac{k_z}{\hbar c}\Om^2r^2\\
\equiv&M_\text{eff}\,c^2+\frac{(\hbar k_r)^2}{2M_\text{eff}}+\frac12M_\text{eff}\Om^2r^2\,.
\end{align*}
Here, $c\approx$ 300,000 km/sec is the speed of the light and $M_\text{eff},\Om$ are determined by the characteristics of the apparatus.

Roughly speaking, the condensation of photons in such a model is allowed because the portion of condensate have only the axial component of the momentum (i.e. $k_r=0$ for the condensate), then without violating 
$$
p\equiv\hbar\sqrt{k_x^2+k_y^2+k_z^2}\neq0\,.
$$
Such a condensation effect might take place also at the cosmological level, playing possibly a role in the explanation of the dark energy of the universe.

The main consequence of the model briefly outlined above is the appearance of a "effective mass" to be determined experimentally. Apart from the fact that we can have BEC on excited levels, we also notice that the appearance of an effective mass is easily allowed in our model as well. In fact, for the one particle photon hamiltonian $\hbar c k$, it is enough to chose $f({\bf k})=a k$, obtaining in the Planck distribution \eqref{vinc1},
$$
\hbar cf({\bf k})k=\frac{(\hbar k)^2}{2M_\text{eff}}\,.
$$
where $M_\text{eff}=\frac{\hbar}{2ac}$.

In the present paper, we have provided a new proposal for the thermodynamics of enormous open systems which can be classified under the name of {\it irreversible thermodynamics}. This can open the perspective, out of the aim of the present paper, of investigating both from a theoretical viewpoint and for concrete models, the arising thermodynamical consequences corresponding to the analogous ones of the usual equilibrium thermodynamics.

\section{appendix: the continuum limit}
\label{appp}

Even if the analysis of Section \ref{1mic1} is correct, it is not applicable to the majority of concrete examples for which the involved hamiltonians have continuous spectrum. In this situation, most of the explicit computations can be carried out only for the ideal gas made of identical particles (or composed systems whose parts are still made of ideal gases which do no interact each other). However, these cases are the mostly treated in literature, being very important also from the conceptual viewpoint. Hence, for the convenience of the reader we outline and justify the passage to the continuum limit for the formulas in Sections \ref{1mic1} and \ref{tprr} in the case of the ideal gas. 

To start with, as usual (e.g. Section 8.5 of \cite{Hu}), we consider a free gas in a rectangular box of size $L$ in $d$ dimensional space. As well known, the momentum 
${\bf p}$ of any particle composing the gas is given by 
${\bf p}=\frac{2\pi\hbar{\bf n}}{L}$, where $\hbar:=h/2\pi$, $h\approx 6.626070040\times 10^{-34} Js$ being the Planck constant, and ${\bf n}$ denotes the numbers occupation vector:
${\bf n}=(n_1,n_2,\cdots, n_d),\;n_k=0,\pm 1,\pm 2,\cdots$. For the energy 
$\epsilon(\{n_k\})$ of a non relativistic ideal gas composed of particles with mass $M$ we have:
\begin{equation}
\epsilon(n_1,n_2,\cdots,n_d)=\frac{p^2}{2M}=\frac{{\hbar}^d}{2M}
{\left(\frac{2\pi}{L}\right)}^d(n_1^2+n_2^2+\cdots+n_d^2)\,.
\label{b1}
\end{equation}
The number of the allowed states $\Gamma$ in the intervals $\Delta n_1,\Delta n_2,\cdots,\Delta n_d$ is thus given by 
$\Gamma=\Delta n_1\Delta n_2\cdots\Delta n_d$.
Thanks to \eqref{b1}, we get
$\Gamma=\frac{L^d}{h^d}\Delta p_1\Delta p_2\cdots\Delta p_d$.
We are now in the position to build the continuum limit of formulae \eqref{0entc}, \eqref{vinc} and \eqref{en02}, obtaining for the ideal gases the usual results (see e.g. \cite{Hu, LL}), and those in Section \ref{BZ}.

First of all, quantum effects come in action
when the allowed states are not "very" near, i.e. when the value for the action of the system is comparable with $h$. As an example, for 
$L\sim 10^{-8}\;cm$ quantum effects cannot be neglected, while for $L>>10^{-8}\;cm$ the classical tractation is appropriate.
From \eqref{b1}, we see that the summation over '$i$' in \eqref{0entc}, \eqref{vinc} and \eqref{en02} can be replaced by a summation over all values set of $\{n_k\}$. In this regard, we have
\begin{align*}
&S_q(\{\n_i\}):=\sum_{\{n_k\}}\bigg[\frac{1+q\bar\n(\{n_k\})}{q}\ln(1+q\bar\n(\{n_k\}))-\bar\n(\{n_k\})\ln\bar\n(\{n_k\})\bigg]\,,\\
&E=\sum_{\{n_k\}}\eps(\{n_k\})\bar\n(\{n_k\})\,,\quad N=\sum_{\{n_k\}}\bar\n(\{n_k\})\,,\\
&e=\sum_{\{n_k\}}f(\eps(\{n_k\}))\eps(\{n_k\})\bar\n(\{n_k\})\,,\quad n=\sum_{\{n_k\}}g(\eps(\{n_k\}))\bar\n(\{n_k\})\,,
\end{align*}
where $\bar\n(\{n_k\})$ is given by \eqref{vinc1}:
\begin{equation}
\label{b3}
\bar\n(\{n_k\})=\frac1{e^{b[f(\eps(\{n_k\}))\eps(\{n_k\})-mg(\eps(\{n_k\}))]}-q}.
\end{equation} 
In this way, the degeneracy factor $g_i$ in  \eqref{0entc}, \eqref{vinc} and \eqref{en02} is absorbed into the summation over $\{n_k\}$.
As a consequence, the continuum limit is now obtained in the following simple way. For a gas with $N$ particles, we should make the replacements
\begin{equation}
\sum_{i}g_i\rightarrow \sum_{\{n_k\}}\rightarrow
V_d\int\frac{\di^d{\bf p}}{h^d}\,,
\label{b4}
\end{equation}
and by \eqref{b3},
\begin{equation}
\bar\n(\{n_k\})\rightarrow n({\bf p}):=\frac1{e^{b[f(\eps({\bf p}))\eps({\bf p})-mg(\eps({\bf p}))]}-q}
\label{b5}
\end{equation}
for the occupation numbers. Accordingly, to simplify notations we define the differential $\ds^d{\bf p}:=\frac{\di^d{\bf p}}{h^d}$ having the dimension $[\text{length}]^{-d}$ of a wave vector in the $d$ dimensional space.
With such a natural normalisation, the $2^{nd}$ part of \eqref{en02} becomes
$$
\int\frac{\ds^d{\bf p}}{e^{b[f(\eps({\bf p}))\eps({\bf p})-mg(\eps({\bf p}))]}-q}=\frac{N}{V_d}\,.
$$
Namely, the generalised Planck distribution $n({\bf p})$ in \eqref{b5} assumes the meaning of the spatial density of the particles having momenta around ${\bf p}=(p_1,\dots,p_d)$ in the infinitesimal hypercube of volume $\di p_1/h\dots\di p_d/h$.

\section*{Acknowledgement} 

The first named author is partially supported by Italian INDAM-GNAMPA. The authors kindly acknowledge L. Accardi for many fruitful discussions on the Local KMS Principle and its potential applications.


\begin{thebibliography}{99}

\bibitem{AFQ}L. Accardi, F. Fagnola, R. Quezada,
Bussei Kenkyu {\bf 97} (2011) 318.

\bibitem{AFQ1}L. Accardi, F. Fagnola, R. Quezada,
Infin. Dimens. Anal. Quantum Probab. 
Relat. Top. {\bf19} (2016) 1650009.

\bibitem{AF} L. Accardi, F. Fidaleo, 
Rep. Math. Phys. {\bf 77} (2016) 153.

\bibitem{BF0}S.D. Barreto, F. Fidaleo, 
Commun. Math. Phys. {\bf 250} (2004) 1.

\bibitem{BF}S.D. Barreto, F. Fidaleo, J. Stat. Phys. {\bf 143} (2011) 657.

\bibitem{BR}O. Bratteli, D.W. Robinson,
Operator algebras and quantum Statistical Mechanics II,
Springer (1981).

\bibitem{BD}F. B\"uy\"ukk{\i}l{\i}\c{c}, D. Demirhan,  A. G\"ule\c{c}, 
Phys. Lett. A {\bf 197} (1995) 209.

\bibitem{dC}J. De Canni\`ere,
Publ. Res. Inst. Math. Sci. {\bf 20} (1984) 79.

\bibitem{dL}A.W. De Leeuw, H.T.C. Stoof, R.A. Duine,
Phys. Rev. A {\bf 88} (2013) 033829.

\bibitem{F1}F.  Fidaleo, J. Func. Anal., {\bf 261} (2011) 604.

\bibitem{F2}F. Fidaleo, Corrigendum to ''Harmonic analysis on perturbed Cayley Trees'' [J. Funct. Anal. 261 (3) (2011) 604-634], J. Func. Anal. {\bf 262} (2012) 4634.

\bibitem{F3}F.  Fidaleo, Infin. Dimens. Anal. Quantum Probab. 
Relat. Top., {\bf15} (2012) 1250024.

\bibitem{F4}F. Fidaleo, J. Stat. Phys. {\bf 160} (2015) 715.

\bibitem{FGI}F. Fidaleo, D. Guido, T. Isola, 
Infin. Dimens. Anal. Quantum Probab. 
Relat. Top. {\bf 14} (2011) 149.

\bibitem{Fiv}D.I. Fivel,
Phys. Rev. Lett. {\bf 65} (1990) 3361, Erratum {\bf 69} (1992) 2020.

\bibitem{Hu}K. Huang,
Statistical Mechanics,
John Wiley \& Sons, 1987.

\bibitem{KSVW}J. Klaers, J. Schmitt, F. Vewinger, M. Weitz, 
Nature {\bf 468} (2010) 545.

\bibitem{LL}L.D. Landau, E.M. Lifshits, Statistical Physics, Course of Theoretical Physics Vol. 5 Pergamon Press, 1994.

\bibitem{LS}A.  Lavagno, P.N. Swamy, 
Physica A {\bf 305} (2002) 310.

\bibitem{T}C. Tsallis, J. Stat. Phys., {\bf 52} (1988) 479.

\bibitem{T1}C. Tsallis, Acta Phys. Polonica B {\bf 46} (2015) 1089.

\bibitem{V1}S. Viaggiu, Mod. Phys. Lett. A 
{\bf 29} (2014). 1450091.

\bibitem{V2}S. Viaggiu, Gen. Relativ. Gravitation {\bf 47}:86 (2015).

 \bibitem{V3}S. Viaggiu, 
 Int. J. Mod. Phys. D {\bf 25} (2016) 1650033.
 
 \bibitem{V4}S. Viaggiu, Gen. Relativ.Gravit. {\bf 48}:100 (2016).


\end{thebibliography}
\end{document}